\numberwithin{equation}{section}
\newcommand{\defeq}{\stackrel{{\mathrm d \mathrm e \mathrm f}}{=}}
\newcommand{\ket}[1]{\left| #1\right\rangle}
\newcommand{\scalarnorm}[1]{\left|#1\right|}
\newcommand{\combn}[3]{\left(\begin{array}{c}#1\\#2\\#3\end{array}\right)}
\newtheorem{theorem}{Theorem}[section]
\newtheorem{lemma}[theorem]{Lemma}
\definecolor{Corcra}{rgb}{0.4,0.0,0.4}
\newenvironment{proof}[1][Proof]{\begin{trivlist}
\item[\hskip \labelsep {\bfseries #1}]}{\end{trivlist}}
\newenvironment{claim}[1][Claim]{\begin{trivlist}
\item[\hskip \labelsep {\bfseries #1}]}{\end{trivlist}}
\newcommand{\qed}{\nobreak \ifvmode \relax \else
      \ifdim\lastskip<1.5em \hskip-\lastskip
      \hskip1.5em plus0em minus0.5em \fi \nobreak
      \vrule height0.75em width0.5em depth0.25em\fi}
\title{One Dimensional Quantum Walks with Memory}
\author{Michael Mc Gettrick\thanks{The De Br\'un Centre for
Computational Algebra, School of Mathematics,
The National University of Ireland, Galway. {\ttfamily
michael.mcgettrick@nuigalway.ie}}}
\begin{document}

\maketitle

\begin{abstract}
We investigate the quantum versions of a one-dimensional 
random walk, whose corresponding Markov Chain is of order 2. 
This corresponds to the walk having a memory of up to 
two previous steps. We derive the amplitudes and probabilities
for these walks, and point out how they differ from both
classical random walks, and quantum walks without memory.

\end{abstract}

\section{Introduction}
``Standard'' One Dimensional Discrete Quantum Walks (also known as Quantum Markov Chains) take place 
on the state space spanned by vectors
\begin{equation}
\ket{n,p}
\end{equation}
where $n\in Z$ (the integers)
and $p\in \{0,1\}$ is a boolean variable (see \cite{venegas,konno}
for a comprehensive treatment). The second variable
$p$ is often called the `coin' state or the chirality, with 
0 representing spin up and 1 representing spin down. It
is the quantum part of the walk, while $n$ is the classical 
part. One step of the walk is given by the transitions
\begin{eqnarray}
\ket{n,0} &\longrightarrow a\ket{n-1,0} + b\ket{n+1,1}\\
\ket{n,1} &\longrightarrow c\ket{n-1,0} + d\ket{n+1,1}
\end{eqnarray} 
where 
\begin{equation}
\left( 
\begin{array}{ll}
 a & b \\ 
c & d
\end{array}
\right) 
\in SU(2),
\end{equation}
the group of $2 \times 2$ unitary matrices of determinant 1.
These walks have been well studied, and their asymptotic behaviour well analyzed
\cite{meyer,nayak,kempe,watrous,ambainis}.

The corresponding classical walk is represented by a Markov
Chain whose transition matrix is tridiagonal, with zeroes
along the diagonal, and 1/2 along off-diagonal (for the fair
coin):
\begin{equation}
\left( 
\begin{array}{llllllll}
 0 & 1/2 & 0 & \dots & \dots & \dots & \dots & 0 \\ 
 1/2 & \ddots & \ddots & \dots & \dots & \dots & \dots & \vdots \\ 
 0 & \ddots & \ddots & \ddots & \dots & \dots & \dots& \vdots \\ 
 \vdots & \dots & 1/2 & 0 & 1/2 & \dots & \dots& \vdots \\ 
 \vdots & \dots & \dots & 1/2 & 0 & 1/2 & \dots & \vdots \\ 
 \vdots & \dots & \dots & \dots & \ddots & \ddots & \ddots & 0 \\ 
 \vdots & \dots & \dots & \dots & \dots & \ddots & \ddots & 1/2 \\ 
 0 & \dots & \dots & \dots & \dots & 0 & 1/2 & 0 \\ 
\end{array}
\right) 
\end{equation}

In this paper we investigate quantum walks ``with memory'': The state space
is spanned by vectors of the form
\begin{equation}
\ket{n_r, n_{r-1}, \dots , n_2, n_1, p}
\end{equation}
where $n_j = n_{j-1} \pm 1$, since the walk only takes one step right or 
left at each time interval. $n_j$ is the position of the walk at time 
$t-j+1$ (so $n_1$ is the current position). The transitions are of the form
\begin{align}
\ket{n_r, n_{r-1}, \dots , n_2, n_1, 0} \longrightarrow
a&\ket{n_{r-1}, \dots , n_2, n_1, n_1 \pm 1,  0}\nonumber\\
&  +
b\ket{n_{r-1}, \dots , n_2, n_1, n_1 \pm 1,  1}\label{eq:tr1}\\
\ket{n_r, n_{r-1}, \dots , n_2, n_1, 1} \longrightarrow
c&\ket{n_{r-1}, \dots , n_2, n_1, n_1 \pm 1,  0}\nonumber\\
&  +
d\ket{n_{r-1}, \dots , n_2, n_1, n_1 \pm 1,  1}\label{eq:tr2}
\end{align}
In analogy with the definition for Markov Chains, we call $r$ the 
\emph{order} of the quantum walk.

\section{Order 2 walks}

The state space is composed of the families of vectors
\begin{align}
\ket{n-1, n, 0},\qquad 
\ket{n-1, n, 1},\qquad 
\ket{n+1, n, 0},\qquad
\ket{n+1, n, 1}
\label{eq:states}
\end{align}
for $n\in Z$. In what follows, we will refer, for obvious reasons, to 
$\ket{n-1, n, p}$ as a right-mover, and to $\ket{n+1, n, p}$ as a left-mover.
Following \cite{ambainis}, it will suit us also to split the transitions (Eq. \ref{eq:tr1}, \ref{eq:tr2})
into two steps, a ``coin flip'' operator $C$ and a ``shift'' operator $S$:
\begin{align}
C:\qquad \ket{n_2, n_1, 0} &\longrightarrow a \ket{n_2, n_1, 0} + b \ket{n_2, n_1, 1} \\
C:\qquad \ket{n_2, n_1, 1} &\longrightarrow c \ket{n_2, n_1, 0} + d \ket{n_2, n_1, 1} \\
S:\qquad \ket{n_2, n_1, p} &\longrightarrow \ket{n_1, n_1\pm 1, p}
\end{align}
We investigate in what follows the possibilities for the shift operator $S$.
Suppose $S$ sends both $\ket{n-1, n, 0}$ and $\ket{n+1, n, 0}$ to the same 
vector, say $\ket{n, n+1, 0}$ (thus, in our parlance, for $p=0$, it sends both
left and right movers to right movers). One observes immediately that this is 
not really a 2nd.\ order chain (its behaviour does not depend on $n_2$, only on 
$n_1$). Indeed, on our state space $\ket{n_2, n_1, p}$ it is not even unitary
(even though it would be on the state space $\ket{n, p}$ of an order 1 walk).
For the behaviour with $p=1$, we have two possibilities:
\begin{enumerate}
\item $S$ sends both $\ket{n-1, n, 1}$ and $\ket{n+1, n, 1}$ in the same direction
(whether left or right). In this case, again $S$ behaves as a first order 
transition, and the whole analysis is that of a 1st.\ order quantum walk.
\item $S$ sends $\ket{n-1, n, 1}$ and $\ket{n+1, n, 1}$ in different directions. So,
for $p=1$, $S$ behaves like a 2nd.\ order chain. In this case, it turns out that
the combined behaviour does not give an invertible transition: i.e. the 
transition matrix is not unitary.
\end{enumerate}

Because of these arguments, to construct a bona fide 2nd.\ order walk, $S$ needs
to send $\ket{n-1, n, p}$ to a different state than it sends $\ket{n+1, n, p}$, 
for both values of $p$. The four possibilities are described in Table 
\ref{tab:shift}.
\begin{table}[h]
\begin{center}
\begin{tabular}{l||l|l|l|l|}
Initial State & \multicolumn{4}{c}{Final State}\\
\hline\hline
& Case a & Case b & Case c & Case d\\
\hline
$\ket{n-1,n,0}$ & $\ket{n,n+1,0}$ & $\ket{n,n+1,0}$ &
 $\ket{n,n-1,0}$ & $\ket{n,n-1,0}$\\ 
\hline
$\ket{n-1,n,1}$ & $\ket{n,n+1,1}$ & $\ket{n,n-1,1}$ &
 $\ket{n,n+1,1}$ & $\ket{n,n-1,1}$\\ 
\hline
$\ket{n+1,n,0}$ & $\ket{n,n-1,0}$ & $\ket{n,n-1,0}$ &
 $\ket{n,n+1,0}$ & $\ket{n,n+1,0}$ \\ 
\hline
$\ket{n+1,n,1}$ & $\ket{n,n-1,1}$ & $\ket{n,n+1,1}$ &
 $\ket{n,n-1,1}$ & $\ket{n,n+1,1}$ \\ 
\hline
\end{tabular}
\caption{Action of shift operator $S$\label{tab:shift}}
\end{center}
\end{table}
There is a simple way to view these cases, as follows.
 Depending on the value of 
the coin state $p$, one either transmits or reflects the walk:
\begin{description}
\item[Transmission] corresponds to 
$\ket{n-1,n,p} \longrightarrow \ket{n,n+1,p}$ and
$\ket{n+1,n,p} \longrightarrow \ket{n,n-1,p}$
(i.e.\ the particle keeps walking in the same direction it was going in)
\item[Reflection] corresponds to 
$\ket{n-1,n,p} \longrightarrow \ket{n,n-1,p}$ and
$\ket{n+1,n,p} \longrightarrow \ket{n,n+1,p}$
(i.e.\ the particle changes direction)
 \end{description}
We re-phrase in Table \ref{tab:shift1} the action of $S$ described in Table \ref{tab:shift}.
\begin{table}[h]
\begin{center}
\begin{tabular}{c||l|l|l|l|}
 Value of $p$ & \multicolumn{4}{c}{Action of $S$} \\ 
\hline\hline
& Case a & Case b & Case c & Case d\\
\hline
0 & Transmit & Transmit & Reflect & Reflect \\ 
\hline
1 & Transmit & Reflect & Transmit & Reflect\\
\hline
\end{tabular}
\caption{Action of shift operator $S$\label{tab:shift1}}
\end{center}
\end{table}

\subsection{Initial Conditions}
We must clarify how to initialize the walk, since at the very beginning, we 
cannot run a 2nd.\ order chain without any history. ``Starting'' at position 
-1, we then move to position 0 (which can be done using a 
first order quantum walk).
This creates the state $\ket{-1,0,0}$, and 
from there on we can run the second order 
operations described above.

\subsection{The Hadamard Walk}
We observe that Cases (a) and (d) do not lead to any interesting features. In 
Case (a), the particle just moves uniformly right or left, depending on the 
initial state. If the initial state is a superposition of left- and right- 
movers, the walk progresses simultaneously right and left. For Case (d), the 
walk ``stays put'', oscillating forever between $n$ and $n+1$ for some value 
of $n$. In both these cases in fact, the coin flip operator $C$ plays no role
(since the action of $S$ is independent of $p$), so there is nothing quantum
about these walks.

However, cases (b) and (c) do yield results of interest. To 
analyze these, we choose a particular coin flip operator $C$ corresponding 
to the Hadamard walk:
\begin{description}
\item[Classically] $C$ sends $\ket{n_2,n_1,p}$ to either 
$\ket{n_2,n_1,0}$ or $\ket{n_2,n_1,1}$ with equal probability 1/2
(fair coin toss).
\item[Quantumly]
\begin{align}
C:\qquad \ket{n_2, n_1, 0} &\longrightarrow 
\dfrac{1}{\sqrt{2}} (\ket{n_2, n_1, 0} +  \ket{n_2, n_1, 1}) 
\label{eq:had1}\\
C:\qquad \ket{n_2, n_1, 1} &\longrightarrow 
\dfrac{1}{\sqrt{2}}( \ket{n_2, n_1, 0} -  \ket{n_2, n_1, 1} )
\label{eq:had2}
\end{align}
 \end{description}
The equations \ref{eq:had1} and \ref{eq:had2} correspond to 
$a = b = c = -d = 1/\sqrt{2} $ which is known as the Hadamard walk.

For both cases (b) and (c) it should be clear that in the classical 
case, we end up with the standard (classical) random walk: In each 
case, transmission and reflection just correspond to picking one of 
two different choices (right or left) at each step.

Let us consider case (c): The first few steps of a standard 
quantum (Hadamard) walk starting at position $n$ would be
\begin{align}
\ket{n, 0} &\longrightarrow 
\dfrac{1}{\sqrt{2}} (\ket{n-1, 0} +  \ket{n+1, 1}) \longrightarrow\\
& \dfrac{1}{2} ( \ket{n-2, 0} + \ket{n,1} + \ket{n,0} - \ket{n+2,1} )
\longrightarrow\\
&\dfrac{1}{2\sqrt{2}}( \ket{n-3, 0} + \ket{n-1,1}
+ \ket{n-1,0}  - \ket{n+1,1}\nonumber\\
&\qquad + \ket{n-1,0} + \ket{n+1,1} - \ket{n+1,0} + \ket{n+3,1}).
\label{eq:wlk1}
\end{align}
Thus after the third step of the walk we see destructive interference 
(cancellation of 4th.\ and 6th.\ terms in expression \ref{eq:wlk1}) and 
constructive interference (addition of 3rd.\ and 5th.\ terms in 
expression \ref{eq:wlk1}).
However for case (c) the first few steps are for example
\begin{align}
&\ket{n-1, n, 0} \longrightarrow 
\dfrac{1}{\sqrt{2}} (\ket{n, n-1, 0} +  \ket{n, n+1, 1}) \longrightarrow\\
& \dfrac{1}{2} ( \ket{n-1, n, 0} + \ket{n-1, n-2, 1} 
+ \ket{n+1, n,0} - \ket{n+1, n+2,1} )
\longrightarrow\\
&\dfrac{1}{2\sqrt{2}}( \ket{n,n-1, 0} + \ket{n,n+1,1}
+ \ket{n-2,n-1,0}  - \ket{n-2,n-3,1}\nonumber\\
&\qquad + \ket{n,n+1,0} + \ket{n,n-1,1} 
- \ket{n+2,n+1,0} + \ket{n+2,n+3,1}) \longrightarrow\\
& \dfrac{1}{4} ( \ket{n-1, n, 0} + \ket{n-1, n-2, 1}
+ \ket{n+1, n,0} - \ket{n+1, n+2,1}\nonumber\\
&\quad + \ket{n-1,n-2,0} +\ket{n-1,n,1}
-\ket{n-3,n-2,0} +\ket{n-3,n-4,1}\nonumber\\
&\quad +\ket{n+1,n,0} +\ket{n+1,n+2,1}
+\ket{n-1,n,0} -\ket{n-1,n-2,1}\nonumber\\
&\quad -\ket{n+1,n+2,0} - \ket{n+1,n,1}
+\ket{n+3,n+2,0} -\ket{n+3,n+4,1}).\label{eq:wlk2}
\end{align}
After three steps, there is no interference (constructive or destructive), but 
the interference appears after step four (e.g., in expression \ref{eq:wlk2},
we can cancel the 2nd.\ and 12th.\ terms, and we can add term 3 and term 9, 
etc.). Thus we can see this walk differs both from the classical random 
walk and from the standard (Hadamard) quantum walk.

\section{Amplitudes}
(This section follows closely the approach taken in Appendix A of \cite{coins}).
We now derive analytical expressions for the wavefunction amplitudes in case 
(c) of table \ref{tab:shift1}, using as quantum coin flip the Hadamard 
transition
\ref{eq:had1} and \ref{eq:had2}.

For the 1-dimensional walk, we view the progression as a sequence of left
($L$) and right ($R$) moves. In general there are many paths to reach a 
particular final state: We need to sum over the amplitudes of these 
different paths (with appropriate phases) to obtain the amplitude for 
that final state.

As a quick example, for the classical case, what is the probability of 
ending at position 1 in a 3-step walk that starts at the origin? The possible
walks ending at 1 are $LRR$, $RLR$ or $RRL$. The total number of possible 
3-step walks is $2^3=8$. So the probability of finishing at positions 1 is
3/8.

In the notation of expression \ref{eq:states}, let our initial state be $\ket{-1,0,0}$
(so the walk starts at the origin) and let us take $n$ steps in the walk.
It should first of all be obvoius that as in the classical case, if $n$ is
odd/even, we can only finish up at an odd/even integer position
(respectively) on the 1-dim lattice.
 Let $N_L$ 
be the number of left moves, and $N_R$ the number of right moves.

\begin{lemma}
\label{lem:phase} We refer to an `isolated' $L$ (respectively $R$) as
one which is not bordered on either side by another $L$ (respectively $R$).
Let $N_L^1$ (respectively $N_R^1$) be the number of isolated $L$s
(respectively isolated $R$s) in the sequence of steps of the walk. Then,
the quantum phase associated with this sequence is 
\begin{align}
(-1)^{N_L+N_R+N_L^1+N_R^1}
\label{eq:ph1}
\end{align}
\end{lemma}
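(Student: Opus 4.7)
The plan is to track the sign accumulated along a fixed sequence of moves $m_1 m_2 \cdots m_n$ by following the evolution of the chirality $p_i$ through the successive coin flips. Two observations drive the argument. First, under the case (c) shift, a state with chirality $1$ is transmitted (direction unchanged) while a state with chirality $0$ is reflected (direction reversed); together with the initial condition $m_0 = R$ coming from the first-order preparation $\ket{-1,0,0}$, this pins down the path-to-chirality correspondence $p_i = 1$ iff $m_i = m_{i-1}$.

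Second, of the four branches of the Hadamard coin in equations (\ref{eq:had1})--(\ref{eq:had2}), only $\ket{1} \mapsto -\tfrac{1}{\sqrt{2}} \ket{1}$ carries a minus sign. Hence the $i$-th coin flip contributes a factor $-1$ to our chosen path precisely when $p_{i-1} = p_i = 1$; by the previous observation this is equivalent to $m_{i-2} = m_{i-1} = m_i$, i.e.\ three consecutive entries of the extended sequence $m_0 m_1 \cdots m_n$ all agree.

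These two steps reduce the lemma to a purely combinatorial parity identity: the number of indices $i \in \{2,\ldots,n\}$ with $m_{i-2} = m_{i-1} = m_i$ has the same parity as $N_L + N_R + N_L^1 + N_R^1$. I would verify this by decomposing $m_0 m_1 \cdots m_n$ into maximal monochromatic runs of lengths $k_1,\ldots,k_t$, observing that each run of length $k$ contributes $\max(0, k - 2)$ same-direction triples, and simplifying the resulting sum modulo $2$. A short case split on whether $m_1 = R$ (so $m_0$ merges with the first run) or $m_1 = L$ (so $m_0$ forms its own length-$1$ run) then reconciles the run structure of the extended sequence with the isolation counts $N_L^1$ and $N_R^1$ of the original path.

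The main delicate point is this last combinatorial step: the boundary contribution from $m_0 = R$ must be handled carefully so that the number of length-$1$ runs in the extended sequence matches, modulo $2$, the count $N_L^1 + N_R^1$ of isolated moves in $m_1 \cdots m_n$; the Hadamard/shift analysis above is essentially mechanical once the correspondence between $p_i$ and $m_i$ is in hand.
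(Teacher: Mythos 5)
Your reduction of the phase to a triple count is correct and matches the paper's key mechanism: only the $\ket{1}\to-\ket{1}$ branch of the Hadamard coin carries a sign, and under the case (c) shift the chirality satisfies $p_i=1$ iff $m_i=m_{i-1}$, so a factor $-1$ appears exactly once per monochromatic triple of consecutive moves (and the $i=1$ flip is harmless because $p_0=0$). From there your route genuinely diverges from the paper's: you evaluate the parity of $T=\sum_{\mathrm{runs}}\max(0,k_j-2)$ directly from the run-length decomposition, whereas the paper first assigns a run of length $j\ge 2$ the phase $(-1)^j$ and then redistributes $L$s among the large clusters (shrinking all but one to size $2$ without changing the total phase) so as to exhibit a single cluster of size $N_L+N_L^1-2C_L+2$. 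Your version is cleaner, and the step you defer as delicate is in fact a one-line identity: with $u$ the number of length-one runs and $t$ the total number of runs of the sequence over which the counts are taken, $T=\sum_j(k_j-2)+u=(N_L+N_R)-2t+u\equiv N_L+N_R+N_L^1+N_R^1\pmod 2$.

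The one point to get right --- and your proposal currently aims at the wrong target --- is which sequence $N_L,N_R,N_L^1,N_R^1$ are counted over. You propose to reconcile the run structure of the extended sequence $m_0m_1\cdots m_n$ with ``the isolation counts of the original path $m_1\cdots m_n$''; with that reading the lemma is false. Concretely, the two-step path $m_1m_2=RR$ from $\ket{-1,0,0}$ has amplitude $-1/2$ (phase $-1$), yet on $m_1m_2$ alone $N_L+N_R+N_L^1+N_R^1=2$. The counts in the lemma, as they are used in Theorem \ref{thm:main} (note $n=N_R+N_L-2$ there), are taken over the sequence including the virtual preparation prefix; with that convention $u=N_L^1+N_R^1$ and the number of symbols is $N_L+N_R$ exactly, so the identity above closes with no case split on $m_1$ whatsoever. (Whether you prepend $R$ or $LR$ is immaterial: the extra $L$ is isolated, so it increments $N_L$ and $N_L^1$ by one each and preserves the parity.) So: right mechanism, right combinatorial identity, but it must be applied to the extended sequence rather than to the bare path.
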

\begin{proof}
In what follows, we first analyze the sequence of $L$s (identical
arguments will apply to the $R$s). An isolated $L$ does not contribute
to the phase, nor does the pair $LL$ bordered by $R$s. The first 
sequence of $L$s that can contribute is $LLL$: In our previous language,
this corresponds to \emph{transmit} followed by \emph{transmit}. After the 
first $L$, the coin state is 0, after the second it is 1, and after the 
third it is 1. It is the transition from 1 to 1 in the coin state that 
gives the factor of $-1$ from the Hadamard walk.

So in general, a sequence of $j$ $LL\dots L$s will give a phase 
contribution of $(-1)^{j}$ for $j>2$.

Now examine clusters of $L$s of size greater than 2. If we have 2 such
clusters, we can move one $L$ from the first cluster to the second, without
changing the overall phase contribution. In such a move, the contribution
of the 1st.\ cluster decreases by a factor of -1, while that of the 2nd.\
increases by the same factor. Suppose we repeat this process, to shrink all
but one of the large clusters to clusters of size 2. We end up with a sequence
that looks like

\begin{align}
\dots RLR\dots RLR\dots RLLR\dots RLR\dots RLLR\dots 
R\underbrace{LLLLL\dots L}_{\text{One large cluster of $L$s}}R\dots
\label{eq:ph2}
\end{align}
Denote by $C_L$ the total number of $L$ clusters. Then the total number of 
$L$ clusters of size 2 is $C_L-N_L^1-1$. So, the size of the one large 
cluster of $L$s is $N_L-N_L^1 - 2(C_L-N_L^1-1) = N_L+N_L^1-2C_L+2$. Its 
phase contribution is therefore $(-1)^{N_L+N_L^1}$.

Since analogous arguments apply for sequences of $R$s, the total phase 
contribution is $(-1)^{N_L+N_R+N_L^1+N_R^1}$.
\qed
\end{proof}

After an $n$- step walk, we want to know what is the probability the particle
is in position $k$. From previous arguments, $(-1)^n = (-1)^k$ and 
$-n \leq k \leq n$. Four possible final quantum states correspond in our 
model to the particle terminating at $k$:

\begin{align}
\underbrace{\ket{k-1, k, 0}}_{\parbox{2cm}{\small sequence\\ ending $\dots LR$}},\qquad 
\underbrace{\ket{k-1, k, 1}}_{\parbox{2cm}{\small sequence\\ ending $\dots RR$}},\qquad 
\underbrace{\ket{k+1, k, 0}}_{\parbox{2cm}{\small sequence\\ ending $\dots RL$}},\qquad
\underbrace{\ket{k+1, k, 1}}_{\parbox{2cm}{\small sequence\\ ending $\dots LL$}}
\label{eq:final}
\end{align}
Let us denote by $a_{kLR}, a_{kRR}, a_{kRL}, a_{kLL}$ the amplitudes of these 4 
states in the final wavefunction $\Psi$. Then the probability when we measure of 
finding the particle at position $k$ is 
\begin{align}
\scalarnorm{a_{kLR}}^2 +
\scalarnorm{a_{kRR}}^2 +
\scalarnorm{a_{kRL}}^2 +
\scalarnorm{a_{kLL}}^2
\label{eq:prob1}
\end{align}
Before calculating the amplitudes, we need another technical lemma.

\begin{lemma}
\label{lem:ones} Consider a composition (ordered partition) of the integer
$n$ into $C$ parts, and let $N^1$ be the number of 1s in the composition. Then 
either
\begin{enumerate}
\item $n=C=N^1$\\
\mbox{or}
\item $\max (0, 2C-n)\leq N^1\leq C-1$.
\end{enumerate}
\end{lemma}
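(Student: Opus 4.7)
The plan is to split on whether every part of the composition equals $1$ and then obtain the bounds on $N^1$ by simple counting.

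First I would dispose of the degenerate case. If every one of the $C$ parts equals $1$, then $n = 1 + 1 + \cdots + 1 = C$ and $N^1 = C$, giving case~1 of the statement. I will then assume from here on that at least one part is strictly greater than $1$, which immediately yields the upper bound $N^1 \le C - 1$, since at most $C-1$ parts can be equal to $1$.

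For the lower bound I would use a minimum-sum argument. Write the composition as $n = a_1 + a_2 + \cdots + a_C$ with each $a_i \ge 1$. Exactly $N^1$ of the $a_i$ equal $1$, and the remaining $C - N^1$ satisfy $a_i \ge 2$. Hence
\begin{equation}
n \;=\; \sum_{i=1}^{C} a_i \;\ge\; N^1 \cdot 1 + (C - N^1) \cdot 2 \;=\; 2C - N^1,
\end{equation}
which rearranges to $N^1 \ge 2C - n$. Combined with the trivial bound $N^1 \ge 0$, this gives $N^1 \ge \max(0, 2C - n)$, completing case~2.

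There is no real obstacle here: the statement is essentially a book-keeping fact about compositions, and the only care needed is to handle the all-ones composition separately, since in that case the ``upper bound'' $N^1 \le C-1$ fails and must be replaced by the equality $n = C = N^1$. Once that edge case is peeled off, the remaining inequalities follow from one line of arithmetic each.
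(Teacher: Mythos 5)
Your proof is correct and proves exactly what the lemma asserts. The structure is the same as the paper's (peel off the all-ones composition, then bound $N^1$ above by $C-1$ and below by $\max(0,2C-n)$), but your treatment of the lower bound is genuinely different and, in fact, tighter as a piece of logic. The paper establishes $N^1 \ge 2C-n$ by an extremal argument: it splits into the subcases $C < n/2$ and $C \ge n/2$, constructs in each a composition realizing the minimum (using as many parts equal to $2$ as possible), and reads off the minimum value of $N^1$ from that construction — which leaves unjustified the claim that this particular configuration really is the minimizer. Your one-line inequality $n = \sum_i a_i \ge N^1 + 2(C - N^1) = 2C - N^1$ applies uniformly to every composition, needs no subcases, and closes that gap. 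What the paper's constructive version buys in exchange is the (implicit) fact that the bounds are actually attained, which is relevant to the tightness of the summation limits in Theorem \ref{thm:main}, though it is not part of the lemma's statement and is not needed for the theorem's correctness, since values of $N^1$ outside the range simply contribute no compositions.
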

\begin{proof}
Case 1.\ is trivial: It is the composition of $n$ into $N^1$ 1s.
For case 2., the upper limit is also trivial: The largest number of individual 1s 
 we can get is $C-1$, which is the composition 
\begin{align}
n=\underbrace{1+1+1+\dots +1}_{(C-1) \text{terms}}+(n-(C-1))
\end{align}
For the lower limit, assume $C<n/2$. It is always possible to 
write down a composition with few terms, without using any 1s. Specifically, 
we can write the first $C-1$ terms as 2, and the last term as the remainder
($n- 2(C-1)$), which is greater than 2 by assumption.

Now assume $C\geq n/2$. The least number of 1s in the composition is obtained 
by writing as many 2s as possible. Suppose we have $r$ 2s, and the other terms 
are 1. Then $2r+N^1=n$. Since $r=C-N^1$, we have that $N^1=2C-n$, and the result
follows.
\qed
\end{proof}
We define the combinatorial symbol
\begin{align}
\left(
\begin{array}{c}
a\\
b\\
c
\end{array}
\right)
= 
\left(
\begin{array}{c}
a\\
b
\end{array}
\right)
\left(
\begin{array}{c}
c-a-1\\
a-b-1
\end{array}
\right)
=\frac{a!}{b!(a-b)!}
\frac{(c-a-1)!}{(c-2a+b)!(a-b-1)!}
\end{align}

\begin{theorem}\label{thm:main}
The amplitudes $a_{kLL}, a_{kLR}, a_{kRL}, a_{kRR}$ for the final 
states given in Equation \ref{eq:final} are
\begin{align}
 2^\frac{n}{2}a_{kLL}&=
\sum_{C=2}^{N_L-1}\quad
\sum_{\parbox[c]{1.5cm}{\tiny $N_L^1=\max (1,$\\ $2C-N_L)$}}^{C-1}\quad
\sum_{\parbox[c]{1.5cm}{\tiny $N_R^1=\max (0,$\\ $2C-N_R-2)$}}^{C-2} 
(-1)^{n+N_L^1+N_R^1}\nonumber\\
&
\frac{N_L^1(C-N_L^1)}{C(C-1)}
\combn{C}{N_L^1}{N_L}
\combn{C-1}{N_R^1}{N_R}
\nonumber\\
&
+\sum_{\parbox[c]{1.5cm}{\tiny $N_L^1=\max(1,$\\ $2N_R-N_L+2)$}}^{N_R}
 (-1)^{N_L+N_L^1}
\frac{N_L^1(N_R-N_L^1+1)}{N_R(N_R+1)}\combn{N_R+1}{N_L^1}{N_L}
\label{eq:aLL} 
\end{align}
\begin{align}
 2^\frac{n}{2}a_{kLR}&=
\sum_{C=2}^{N_L-1}\quad
\sum_{\parbox[c]{1.5cm}{\tiny $N_L^1=\max (1,$\\ $2C-N_L)$}}^{C-1}\quad
\sum_{\parbox[c]{1.5cm}{\tiny $N_R^1=\max (1,$\\ $2C-N_R)$}}^{C-1} 
(-1)^{n+N_L^1+N_R^1}
\nonumber
\\
&
\frac{N_L^1(N_R^1)}{C^2}
\combn{C}{N_L^1}{N_L}
\combn{C}{N_R^1}{N_R}
+\delta_{N_L,N_R}
\nonumber
\\
&
+\sum_{\parbox[c]{1.5cm}{\tiny $N_L^1=\max(1,$\\ $2N_R-N_L)$}}^{N_R-1}
 (-1)^{N_L+N_L^1}
\frac{N_L^1}{N_R}\combn{N_R}{N_L^1}{N_L}
+\sum_{\parbox[c]{1.5cm}{\tiny $N_R^1=\max(1,$\\ $2N_L-N_R)$}}^{N_L-1}
 (-1)^{N_R+N_R^1}
\frac{N_R^1}{N_L}\combn{N_L}{N_R^1}{N_R}
\label{eq:aLR} 
\end{align}
\begin{align}
 2^\frac{n}{2}a_{kRL}&=
\sum_{C=2}^{N_L-1}\quad
\sum_{\parbox[c]{1.5cm}{\tiny $N_L^1=\max (2,$\\ $2C-N_L)$}}^{C-1}\quad
\sum_{\parbox[c]{1.5cm}{\tiny $N_R^1=\max (0,$\\ $2C-N_R-2)$}}^{C-2} 
(-1)^{n+N_L^1+N_R^1}
\nonumber
\\
&
\frac{N_L^1(N_L^1-1)}{C(C-1)}
\combn{C}{N_L^1}{N_L}
\combn{C-1}{N_R^1}{N_R}
 + \delta_{N_L-1,N_R}
\nonumber
\\
&
+\sum_{\parbox[c]{1.5cm}{\tiny $N_L^1=\max(2,$\\ $2N_R-N_L+2)$}}^{N_R}
 (-1)^{N_L+N_L^1}
\frac{N_L^1(N_L^1-1)}{N_R(N_R+1)}\combn{N_R+1}{N_L^1}{N_L}
\nonumber
\\
&
+\sum_{\parbox[c]{1.5cm}{\tiny $N_R^1=\max(0,$\\ $2N_L-N_R-2)$}}^{N_L-2}
 (-1)^{N_R+N_R^1}
\combn{N_L-1}{N_R^1}{N_R}
\label{eq:aRL} 
\end{align}
\begin{align}
2^\frac{n}{2}a_{kRR}&= 
\sum_{C=1}^{N_L-1}\quad
\sum_{\parbox[c]{1.5cm}{\tiny $N_L^1=\max (1,$\\ $2C-N_L)$}}^{C-1}\quad
\sum_{\parbox[c]{1.5cm}{\tiny $N_R^1=\max (0,$\\ $2C-N_R)$}}^{C-1} 
(-1)^{n+N_L^1+N_R^1}
\nonumber
\\
&
\frac{N_L^1(C-N_R^1)}{C^2}
\combn{C}{N_L^1}{N_L}
\combn{C-1}{N_R^1}{N_R}
\nonumber
\\
&
+\sum_{\parbox[c]{1.5cm}{\tiny $N_R^1=\max(0,$\\ $2N_L-N_R)$}}^{N_L-1}
 (-1)^{N_R+N_R^1}
\frac{N_L-N_R^1}{N_L}\combn{N_L}{N_R^1}{N_R}
\label{eq:aRR} 
\end{align}
where $k=N_R-N_L,\ n=N_R+N_L-2$ and $\delta$ is the standard 
kronecker delta function ($\delta_{p,q} = 1$ if $p=q$ and zero otherwise).

\end{theorem}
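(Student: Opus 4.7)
The plan is to derive each of the four amplitudes as a signed enumeration of L/R step sequences, weighted by the phase $(-1)^{N_L+N_R+N_L^1+N_R^1}$ from Lemma \ref{lem:phase} and normalised by $2^{-n/2}$, since every application of the Hadamard coin contributes a factor of $1/\sqrt{2}$ and the initial amplitude has modulus $1$. The four final kets of (\ref{eq:final}) are distinguished by the coin state $p$ and by which of the two neighbours of $k$ appears as the previous position $n_2$; this records exactly the last two moves of the walk (LR, RR, RL or LL), and the initial state $\ket{-1,0,0}$ pins down the first letter as well. Each amplitude is therefore a weighted sum over L/R sequences with prescribed length, prescribed $N_L,N_R$, prescribed first letter, and prescribed last two letters.

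Next, I would decompose each such sequence into its maximal alternating clusters. The key combinatorial input is that the number of compositions of $N_L$ into $C_L$ parts having exactly $N_L^1$ ones equals $\combn{C_L}{N_L^1}{N_L}$, by a stars-and-bars argument paralleling the lower-bound computation in the proof of Lemma \ref{lem:ones}, and the L- and R-compositions are independent so their counts multiply. The pair $(C_L,C_R)$ is pinned down, up to a single running index $C$, by what letters begin and end the sequence; this is what produces the $\combn{C}{\cdot}{\cdot}$ versus $\combn{C-1}{\cdot}{\cdot}$ mismatch between the four formulas. Further endpoint restrictions (e.g.\ an ending $RL$ forces the terminal L-cluster to have size exactly $1$, whereas $LL$ forces it to have size $\ge 2$) translate, for a uniformly drawn composition of $N_L$ into $C$ parts with $N_L^1$ ones, into the conditional probabilities $N_L^1/C$ for a designated part to equal $1$ and $(C-N_L^1)/C$ for it to be $\ge 2$; jointly constraining head and tail on each side produces the rational prefactors $\frac{N_L^1(C-N_L^1)}{C(C-1)}$, $\frac{N_L^1 N_R^1}{C^2}$, $\frac{N_L^1(N_L^1-1)}{C(C-1)}$ and $\frac{N_L^1(C-N_R^1)}{C^2}$ that decorate the summands of (\ref{eq:aLL})--(\ref{eq:aRR}).

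Finally, I would handle the degenerate compositions singled out as Case~1 of Lemma \ref{lem:ones}, namely $N_L = C = N_L^1$ (and symmetrically on the R-side), which correspond to sequences in which all L's (respectively all R's) are isolated. When this degeneracy is compatible with the endpoint constraints, it yields the separate single sums appearing after the triple sum in each formula, with enumerator $\combn{N_R+1}{N_L^1}{N_L}$ or $\combn{N_L}{N_R^1}{N_R}$ describing the non-degenerate side; the two unique fully-alternating sequences $LRLR\cdots$ and $RLRL\cdots$, in which both sides are simultaneously all ones, account for the Kronecker-delta terms $\delta_{N_L,N_R}$ in $a_{kLR}$ and $\delta_{N_L-1,N_R}$ in $a_{kRL}$. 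Assembling all contributions with the phase $(-1)^{N_L+N_R+N_L^1+N_R^1}$, and using the convention $n=N_R+N_L-2$ to rewrite $(-1)^{N_L+N_R}$ as $(-1)^n$, recovers each of the four claimed expressions.

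The main obstacle I anticipate is the positional bookkeeping: for each of the four endings one must identify precisely which clusters are forced into which size regime (while accounting for the initial R that created $\ket{-1,0,0}$), and avoid double-counting the overlap between the endpoint size restrictions and the isolation counts $N_L^1, N_R^1$. This is what forces, for example, the shifted lower bound $\max(2,2C-N_L)$ in (\ref{eq:aRL}) and the bespoke handling of the all-isolated compositions via the separate small sums. The arithmetic is otherwise mechanical once the case analysis is set up correctly.
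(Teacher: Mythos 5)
Your proposal follows essentially the same route as the paper's Appendix~A proof: the phase from Lemma~\ref{lem:phase}, the count $\combn{C}{N^1}{N}$ of compositions of $N$ into $C$ parts with exactly $N^1$ ones, the endpoint fractions $N_L^1/C$, $(C-N_L^1)/(C-1)$, etc., and the separate treatment of the degenerate $N=C=N^1$ cases of Lemma~\ref{lem:ones} that produces the extra single sums and the Kronecker deltas. The only slip is describing the second fully alternating sequence as $RLRL\cdots$ --- since every walk begins $LR$, it is in fact $LRLR\cdots RL$, whence $N_L=N_R+1$ and the term $\delta_{N_L-1,N_R}$ --- but this does not affect the substance of the argument.
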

\begin{proof}
Because of its slightly lengthy and technical nature, we relegate the
proof to Appendix A.
\qed
\end{proof}
\section{Simulations and Analysis}
\begin{figure}[!h]
\begin{minipage}{0.46\linewidth}
\hspace*{-1cm}
\includegraphics*[width=8cm, height=7cm, viewport = 40 250 600 770]{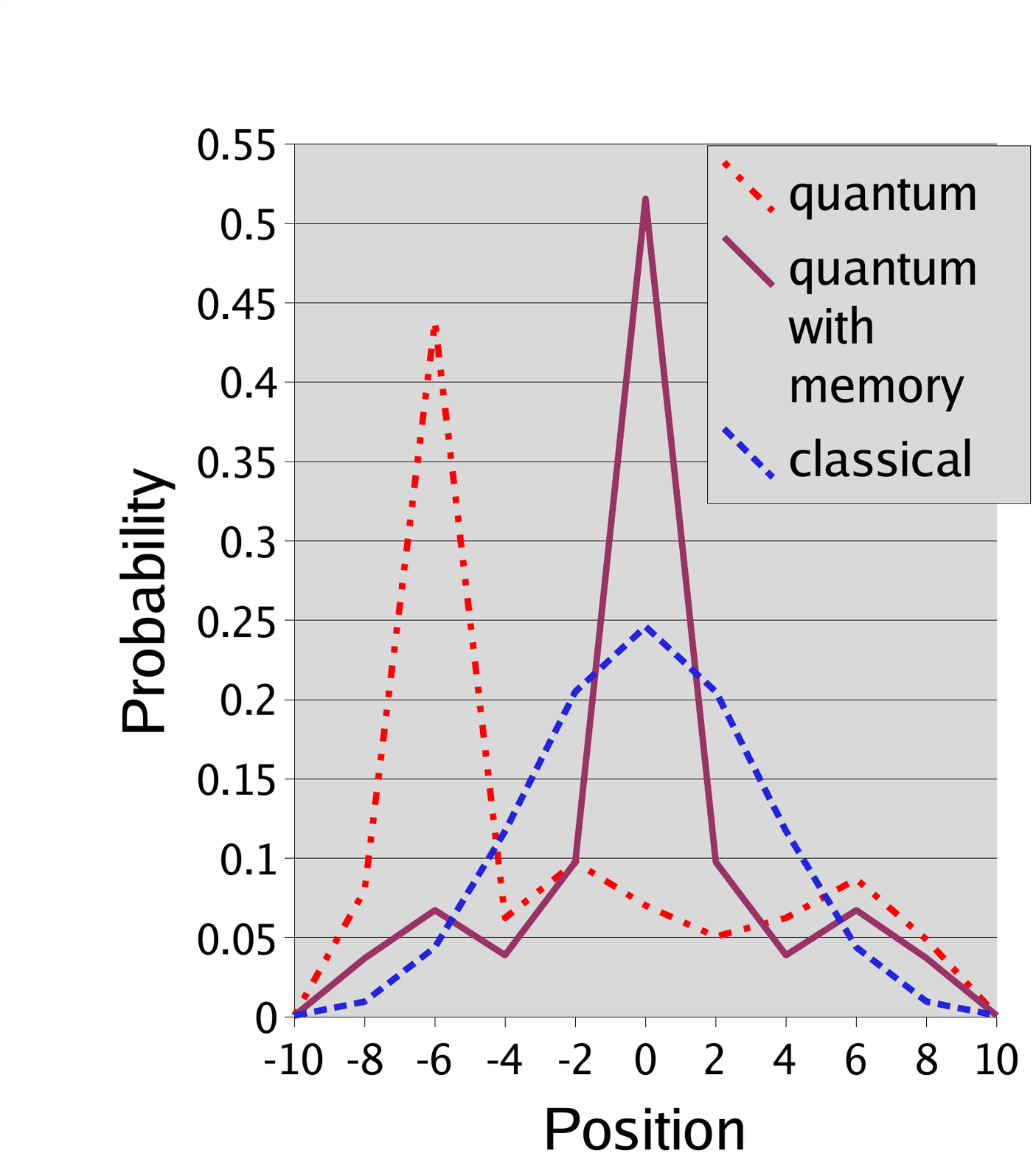}
\caption{Probability Distribution after 10 steps}
\label{fig:10step}
\end{minipage}%
\hfill
\begin{minipage}{0.46\linewidth}
\hspace*{-0.5cm}
\includegraphics*[width=6cm, height=7cm, viewport = 10 50 600 750]{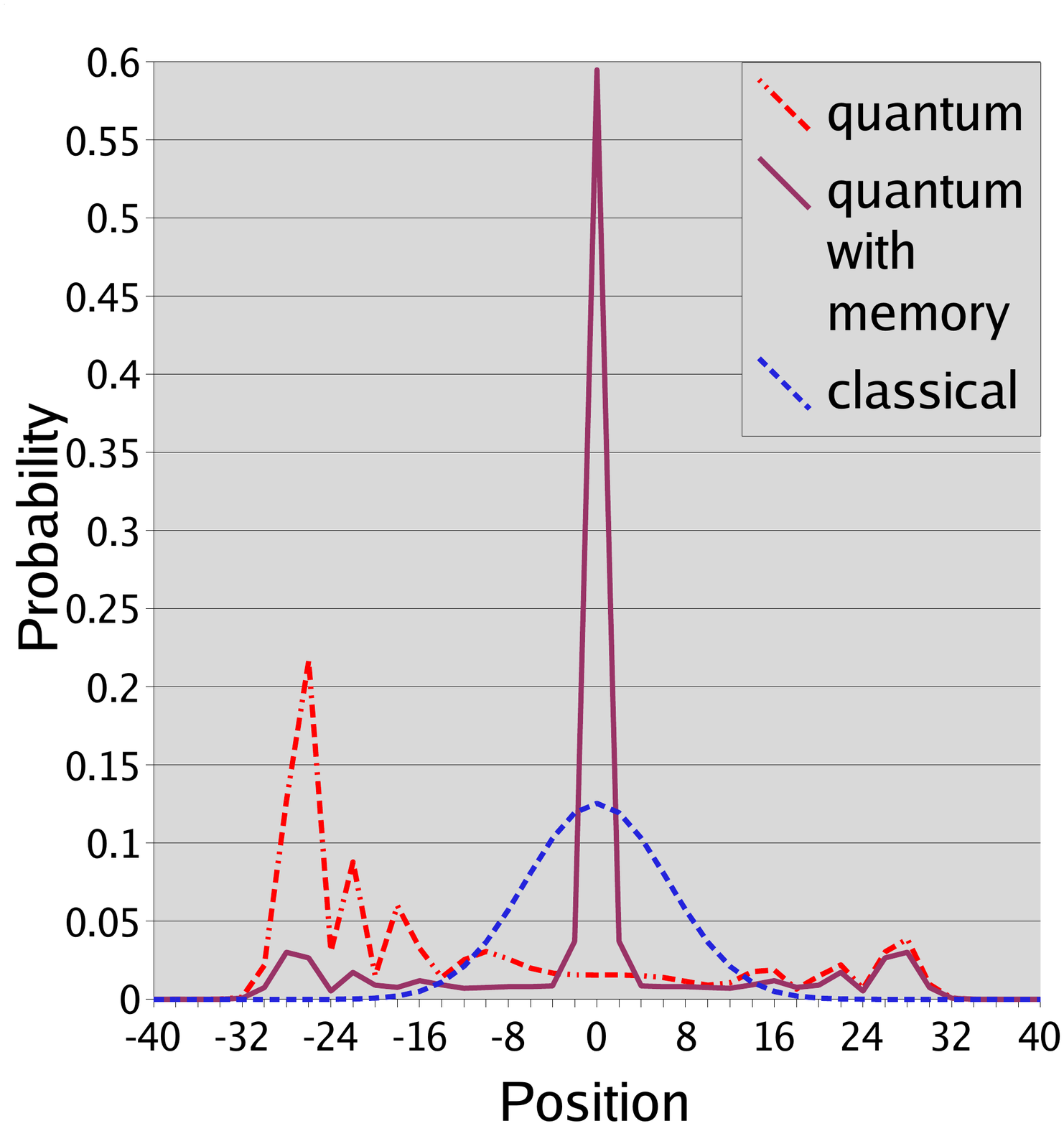}
\caption{Probability Distribution after 40 steps}
\label{fig:100step}
\end{minipage}
\end{figure}

We show in Figures \ref{fig:10step} and \ref{fig:100step} the amplitudes for 
the 3 different kinds of walks (classical, quantum, quantum with memory). The 
simulations are carried out using AXIOM \cite{axiom}. For completeness, we 
include in Appendix B the commented code for generating the Quantum Walk with
memory.

In figures \ref{fig:10step} and \ref{fig:100step}, the initial states for 
the three cases are $\ket{0}$, $\ket{0, 0}$ and $\ket{-1,0,0}$ (by abuse of 
notation, the ket vector here $\ket{0}$ represents the classical case).
As has been pointed out by a number of authors (see e.g. \cite[Appendix A]{nayak})
in the quantum case we can choose a more symmetric initial state (still of 
course representing the particle starting at the origin). In general this will
give rise to a different probability distribution. For the quantum walk we 
start at $(\ket{0,0} + \ket{0,1})/\sqrt{2}$ and for our walk with memory, we start
at $(\ket{-1,0,0} + \ket{-1,0,1} + \ket{1,0,0} + \ket{1,0,1})/2$. The probability
distributions for these cases (for a 40-step walk) are plotted in figure
\ref{fig:SYMstep}.
\begin{figure}[!h]
\centering
\hspace*{-1cm}
\includegraphics*[width=10cm, height=7cm, viewport = 40 250 600 770]{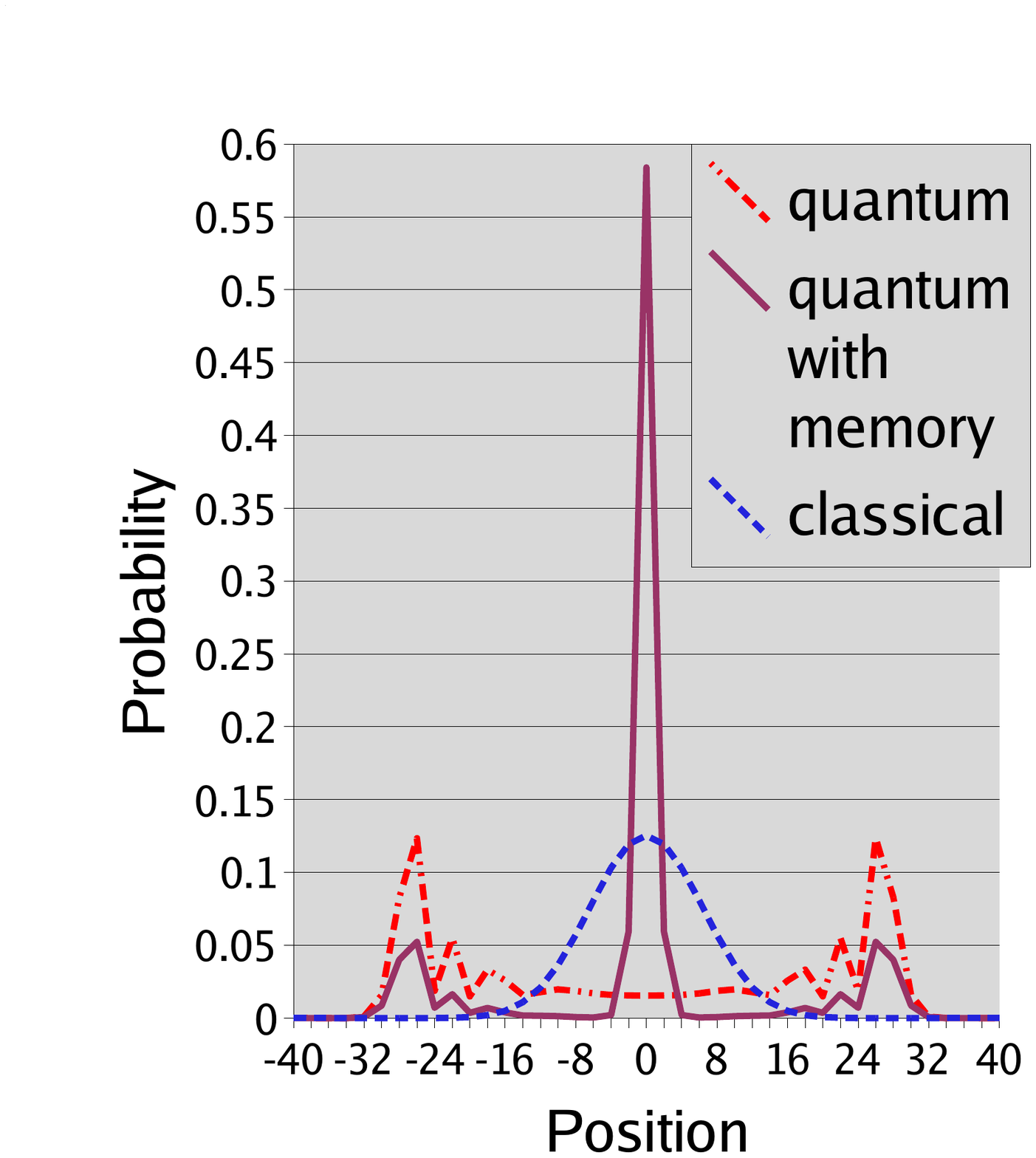}
\caption{Probability Distribution after 40 steps for symmetric initial state}
\label{fig:SYMstep}
\end{figure}

What is immediately noticeable is the high probability that the quantum walk 
with memory stays at the origin (even after 40 steps, it has more than 50\% 
chance of being found at the origin). In the terminology of 
Konno \cite{konno1}, we say the particle 
is \textbf{localized} at the
origin. Also of note are the smaller peaks that 
occur quite a distance from the origin (at $\pm 6$ in Figure \ref{fig:10step}
and at $\pm 28$ in Figure \ref{fig:100step}). The distribution is symmetric about
zero, except for the one specific case $N_R = N_L\pm 1$ (i.e. in a walk 
with an odd number of steps, the probability of finding the particle at positions
$\pm 1$ is not the same).

\begin{claim}
As the quantum walk with memory becomes infinitely long, for even 
$n$ there is still a chance
of over 50\% of finding the particle at the origin!
\end{claim}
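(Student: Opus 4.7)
The plan is to exhibit localization via the standard Fourier (momentum-space) diagonalization of the translation-invariant evolution operator, showing that two of its four spectral bands are perfectly flat at $\lambda=\pm 1$. These flat bands trap a fixed fraction of the initial amplitude at the starting site for all time.

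First I would reparametrize the state $\ket{n_2,n_1,p}$ as $(n_1;\sigma,p)$, where $\sigma\in\{R,L\}$ records the sign of $n_2-n_1$, producing a four-dimensional internal index. Since the shift-plus-Hadamard step in case (c) commutes with position translations, the evolution is conjugated by the discrete Fourier transform in $n_1$ to multiplication by a $4\times 4$ unitary $\tilde U(\theta)$. An explicit computation from Table \ref{tab:shift} and equations \ref{eq:had1}--\ref{eq:had2} yields
\begin{equation*}
\tilde U(\theta)=\tfrac{1}{\sqrt{2}}\begin{pmatrix}
0 & 0 & e^{i\theta} & e^{i\theta}\\
e^{i\theta} & -e^{i\theta} & 0 & 0\\
e^{-i\theta} & e^{-i\theta} & 0 & 0\\
0 & 0 & e^{-i\theta} & -e^{-i\theta}
\end{pmatrix}.
\end{equation*}

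The key observation is that $\lambda=+1$ and $\lambda=-1$ are eigenvalues of $\tilde U(\theta)$ for \emph{every} $\theta$ (checking that the first and third rows of $\tilde U\mp I$ are linearly dependent suffices). Matching traces, the remaining two eigenvalues must be $e^{\pm i\phi(\theta)}$ with $\cos\phi=-\cos\theta/\sqrt{2}$, which are genuinely $\theta$-dependent. Let $\Pi_{\text{flat}}(\theta)$ denote the spectral projector onto the two-dimensional flat-band subspace. Then
\begin{equation*}
\tilde U(\theta)^n = \Pi_{+}(\theta)+(-1)^n\Pi_{-}(\theta) + e^{in\phi(\theta)}\Pi_3(\theta)+e^{-in\phi(\theta)}\Pi_4(\theta),
\end{equation*}
and since $\phi'(\theta)$ does not vanish identically, a stationary-phase / Riemann--Lebesgue argument shows that the dispersive contribution to the amplitude at any fixed site is $o(1)$ as $n\to\infty$. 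Hence, for even $n$,
\begin{equation*}
\lim_{n\to\infty}\psi_n(0)=\int_{-\pi}^{\pi}\!\frac{d\theta}{2\pi}\,\Pi_{\text{flat}}(\theta)\,\tilde\psi_0(\theta).
\end{equation*}

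For the initial condition $\ket{\psi_0}=\ket{-1,0,0}$, the momentum representation is $\tilde\psi_0(\theta)=\ket{R,0}$ (constant in $\theta$), so the limiting probability at the origin equals the squared norm of the $\theta$-average of $\Pi_{\text{flat}}(\theta)\ket{R,0}$. To finish, I would solve explicitly for the two flat-band eigenvectors of $\tilde U(\theta)$ (a routine $4\times 4$ linear-algebra exercise), assemble $\Pi_{\text{flat}}(\theta)$ in closed form, and evaluate the resulting trigonometric integral; the numerics in Figure \ref{fig:100step} indicate that the answer exceeds $0.5$ and the bound should fall out by direct computation. The main obstacle is precisely this last quantitative step: the existence of the flat bands \emph{qualitatively} forces localization, but establishing the explicit lower bound ``$>1/2$'' requires a careful evaluation (or clean lower bound) of the scalar integral obtained from $\Pi_{\text{flat}}(\theta)\ket{R,0}$.
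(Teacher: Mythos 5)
Your route is genuinely different from the paper's. The paper argues by induction on the even step number $n$, tracking only the two amplitudes $a_{0LR}(n)$ and $a_{0RL}(n)$ at the origin, showing they stay positive and sum to $1$ at every even step (the cross-contributions from $a_{0LL}$ and $a_{0RR}$ merely shift an $\epsilon$ from one to the other), whence $|a_{0LR}(n)|^2+|a_{0RL}(n)|^2\ge 1/2$. You instead Fourier-transform the translation-invariant step into a $4\times4$ matrix $\tilde U(\theta)$ and exhibit two flat bands at $\lambda=\pm1$. Your skeleton is sound: writing the internal basis as $(R0,R1,L0,L1)$ one gets exactly your matrix (up to the sign convention on $\theta$), and one can check that $\det(\tilde U(\theta)\mp I)\equiv 0$ and that the remaining eigenvalues are $e^{\pm i\phi}$ with $\cos\phi=-\cos\theta/\sqrt2$; the stationary-phase argument then kills the dispersive part at any fixed site. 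This is the standard localization machinery (in the style of the Inui--Konno--Segawa analyses the paper cites), and if completed it buys strictly more than the paper's argument: the exact limiting probability at every site, not just a lower bound at the origin. One small slip: the first and third rows of $\tilde U(\theta)\mp I$ are \emph{not} linearly dependent (row one has a zero in the second slot where row three has $e^{-i\theta}/\sqrt2$), so that particular shortcut for verifying the flat eigenvalues fails; you need the full determinant or an explicit eigenvector.

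The genuine gap is that the statement to be proved is quantitative --- the limiting probability at the origin exceeds $1/2$ --- and your argument stops exactly where that number would be produced. Flat bands plus Riemann--Lebesgue give you only that the limit equals $\bigl\|\int\frac{d\theta}{2\pi}\Pi_{\mathrm{flat}}(\theta)\ket{R,0}\bigr\|^2$ summed over internal components; without computing $\Pi_{\mathrm{flat}}(\theta)\ket{R,0}$ and the resulting trigonometric integrals you cannot even conclude this is nonzero (a flat band is useless if the initial state happens to be orthogonal to it after averaging), let alone that it exceeds $0.5$. Appealing to Figure \ref{fig:100step} for that last step is not a proof. To close the gap you must carry out the ``routine linear-algebra exercise'' you defer: find the two eigenvectors of $\tilde U(\theta)$ at $\pm1$ (normalizing carefully, since their norms depend on $\theta$), assemble the $4\times 4$ projector, integrate each matrix element of the column indexed by $(R,0)$ over $\theta$, and bound the sum of squares below by $1/2$. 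By contrast, the paper's elementary induction, whatever its own informality, is aimed directly at that inequality: two positive reals summing to one have squares summing to at least one half.
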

\begin{proof}
The proof proceeds by setting $n=2j$ and using an inductive argument on $j$
(the particle can only be at the origin for an even number of steps). Let us 
denote by $a_{k\ast\ast}(n)$ the dependence of the amplitude on the 
number of steps $n$, where $\ast\ast$ is one of $LL, LR, RL, RR$. The argument
focuses on the dependence of $S= \{a_{0LR}(n), a_{0RL}(n) \} $ on their 
equivalents two steps earlier $S^\dagger = \{a_{0LR}(n-2), a_{0RL}(n-2) \} $.
\begin{description}
\item[Base Case]
For $n=2$, $a_{0LR}(2) = a_{0RL}(2) = 1/2$ are 
the only terms contributing to the probability at the origin.
\item[Inductive Step]
Let us consider $a_{0LR}(n)$ and $a_{0RL}(n)$. Assume the amplitudes 
$a_{0LR}(n-2)$ and $a_{0RL}(n-2)$ are both positive and sum to 1 (as in 
the base case).
\begin{description}
\item[Amplitude of $\ket{-1,0,0}$] This corresponds to $a_{0LR}(n)$. There are 
2 contributions from $a_{0\ast\ast}(n-2)$:
\begin{description}
\item[Contribution from $a_{0LR}(n-2)$] The particle moves left and then right.
The phase contribution stays positive. The amplitude factor is $(1/\sqrt{2})^2=0.5$.
\item[Contribution from $a_{0RL}(n-2)$] The particle moves left and then right.
The phase contribution stays positive. The amplitude factor is $(1/\sqrt{2})^2=0.5$.
 \end{description}
Thus the total amplitude contribution is 
$0.5 a_{0LR}(n-2) + 0.5 a_{0RL}(n-2) = 0.5 a_{0LR}(n-2) + 0.5 (1-a_{0LR}(n-2)) = 0.5$
\item[Amplitude of $\ket{1,0,0}$] This corresponds to $a_{0RL}(n)$. There are 
2 contributions from $a_{0\ast\ast}(n-2)$:
\begin{description}
\item[Contribution from $a_{0LR}(n-2)$] The particle moves right and then left.
The phase contribution stays positive. The amplitude factor is $(1/\sqrt{2})^2=0.5$.
\item[Contribution from $a_{0RL}(n-2)$] The particle moves right and then left.
The phase contribution stays positive. The amplitude factor is $(1/\sqrt{2})^2=0.5$.
 \end{description}
Thus the total amplitude contribution is 
$0.5 a_{0LR}(n-2) + 0.5 a_{0RL}(n-2) = 0.5 a_{0LR}(n-2) + 0.5 (1-a_{0LR}(n-2)) = 0.5$
 \end{description}
Thus we have constructive interference for both amplitudes in the transitions 
from set $S$ to set $S^\dagger$.
 \end{description}
We need to show further that amplitudes $a_{0LL}(n-2)$ and $a_{0RR}(n-2)$ will 
not decrease our amplitudes for $a_{0\ast\ast}(n)$. Let us consider 
$a_{0LL}(n-2)$. Again, the two contributions arise from moving either $RL$ or
$LR$. The amplitude factor, as before, is 0.5. But the phase factor 
contributions are opposite: For $RL$ it is positive, while for $LR$ it is 
negative. This adds $0.5a_{0LL}(n-2)$ to the amplitude $a_{0RL}(n)$ and 
subtracts $0.5a_{0LL}(n-2)$ from the amplitude $a_{0LR}(n)$. Letting
$\epsilon = 0.5a_{0LL}(n-2)$, since $a_{0LR}(n-2)$ and $a_{0RL}(n-2)$
are two positive numbers summing to 1, so are 
$a_{0LR}(n-2) - \epsilon$ and $a_{0RL}(n-2) + \epsilon$.

A similar argument holds for the contribution from $a_{0RR}(n-2)$. We have shown,
for all even $n$, that $a_{0LR}(n)$ and $a_{0RL}(n)$ are two positive numbers 
summing to one, and hence their contribution to the probability 
$|a_{0LR}(n)|^2 + |a_{0RL}(n)|^2$ is at least 0.5. Note that in general,
$a_{0LL}(n)$ and $a_{0RR}(n)$ will be non-zero, and will also contribute 
to the probability at zero (though it turns out this contribution is small).
\qed
\end{proof}
\section{Conclusion}
We have defined a new kind of quantum walk with (two-step) memory, and investigated 
its properties. We see it exhibits some similarities with the classical random walk
(symmetric probability distribution, high probability at the origin), and other 
similarities with the quantum (Hadamard) walk (oscillatory behaviour, ``tails'' that
propagate faster than in the classical case). We prove the remarkable feature of localization at the origin:
in the $n\to \infty$ limit, for a symmetric initial state,
 the probability the particle is found at the origin 
is not less than 0.5.

A referee has pointed out the work of Kendon (\cite{kendon}) on decoherence in 
quantum walks, where probability distributions that peak at the origin are 
also obtained. However, a fundamental difference is that our peak at the 
origin is independent of the walk length, unlike the results for the decoherence
case. It is worth examining this in more detail to see if there are other 
similarities in the results. 

Other models of quantum walks with
history have been constructed (see \cite{stang,flitney}) by using 
multiple coins or modifying the Hamiltonian. We find intriguing that
the probability distribution for the 2-coin model in 
\cite[Figure 4]{flitney} seems close in shape to our
results (in e.g. Figure \ref{fig:100step}), with again the fundamental 
difference that in our case the peak at the origin is much larger 
and independent of the walk length. Models with multiple internal
states (\cite{inui, segawa}) have also been found to exhibit 
memory effects and localization.
\section*{Acknowledgements}
This work was carried out while the author was visiting the Laboratoire de 
Recherche en Informatique (Universit\'e Paris Sud). We thank Miklos Santha
and other colleagues at LRI for their hospitality. We also thank the anonymous 
referees for their comments and corrections.
\appendix
\section{Appendix: Proof of Theorem \ref{thm:main}}
Here we prove Theorem \ref{thm:main}.
We denote by $C_L$ (respectively $C_R$) the number of clusters of $L$s
(respectively $R$s) in a sequence of $L$s and $R$s representing a particular
walk. For example, in ${\color{Corcra}L}R{\color{Corcra}LL}R{\color{Corcra}LLL}RR{\color{Corcra}L}$, $C_L=4$ and $C_R=3$.

We examine firstly compositions of the integer $N_L$ into $C_L$ parts. Because 
of the phase dependence given in Lemma \ref{lem:phase}, we need to 
know how this composition depends on $N_L^1$, the number of clusters of size
one. The number of distinct compositions of $N_L$ with $C_L$ parts, and with no
part of size 1 is
\begin{align}
\left(
\begin{array}{c}
N_L-C_L-1\\
C_L-1
\end{array}
\right)
\end{align}
(see for example \cite[page 15]{merlini} ). If we want exactly one part of 
size 1, we take a composition of $N_L-1$ into $C_L-1$ parts, no part 
of size 1, and add the one 
cluster of size 1. The number of ways we can do this is
\begin{align}
N_L
\left(
\begin{array}{c}
N_L-C_L-1\\
C_L-2
\end{array}
\right).
\end{align}
In the general case, we want to add $N_L^1$ clusters of size one to 
a composition of $N_L-N_L^1$ into $C_L-N_L^1$ parts, none of which is one.
We can imagine having $C_L$ boxes: $N_L^1$ of them will be filled by 
clusters of size one (in $C_L!/(C_L-N_L^1)!N_L^1!$ distinct ways), and the 
remaining $C_L-N_L^1$ boxes will take a composition of $N_L-N_L^1$ into
$C_L-N_L^1$ parts, without any ones. So we get
\begin{align}
\parbox{3.6cm}{\small Number of compositions of $N_L$ into 
$C_L$ parts with exactly $N_L^1$ ones}
=
\left(
\begin{array}{c}
C_L\\
N_L^1
\end{array}
\right)
\left(
\begin{array}{c}
N_L-C_L-1\\
C_L-N_L^1-1
\end{array}
\right)
\defeq
\left(
\begin{array}{c}
C_L\\
N_L^1\\
N_L
\end{array}
\right).
\end{align}
(\emph{N.B.\ } This formula does not apply in the extreme case 
$N_L=C_L=N_L^1$, in which case the number of such compositions is
just 1.)
For fixed values of $N_L$ and $N_R$ (so a fixed final position $k$), the number 
of walks with $C_L$ left clusters ($N_L^1$ of size 1) 
 and $C_R$ right clusters ($N_R^1$ of size 1) is
\begin{align}
\left(
\begin{array}{c}
C_L\\
N_L^1\\
N_L
\end{array}
\right)
\left(
\begin{array}{c}
C_R\\
N_R^1\\
N_R
\end{array}
\right).
\end{align}
(Of course, $C_L$ and $C_R$ are not independent - they differ by at most 1.)
Putting in the phase factor from Lemma \ref{lem:phase}, the $\sqrt{2}$ factors
from Equations \ref{eq:had1} \ref{eq:had2}, and summing over
 $C_L, C_R, N_L^1, N_R^1$ we get the amplitude expression
\begin{align}
\sum_{C_L}
\sum_{C_R}
\sum_{N_L^1}
\sum_{N_R^1}
&
\frac{(-1)^{n+N_L^1+N_R^1}}{(2)^{n/2}}
\left(
\begin{array}{c}
C_L\\
N_L^1\\
N_L
\end{array}
\right)
\left(
\begin{array}{c}
C_R\\
N_R^1\\
N_R
\end{array}
\right).\label{eq:genamp1} 
\end{align}
We now derive the specific expressions for the four possible 
states with final position $k$. In all cases we take as initial 
state $\ket{-1,0,0}$ corresponding to a walk beginning 
$LR\dots $ (This means in particular that $N_L^1$ is at least 1,
corresponding to the first $L$.)
\begin{description}
\item[Final State $\ket{k+1,k,1}$]
This corresponds to a walk of the form $LR\dots LL$. Since the 
sequence begins and ends with an $L$, we have $C_R=C_L-1$, which 
removes the summation over $C_R$.
In general, $N_L\ge C_L\ge N_L^1$, and either all
three of these numbers are different or are equal.
In this particular case, the possible values of $C_L$ run
from 2 to $N_L-1$.

Let us examine $N_L^1$ and $N_R^1$. For a particular value of
$C_L$, $N_L^1$ will run from 1 to $C_L-1$ (the upper limit is not 
$C_L$ because of our observation that the 3 numbers $N_L$, $C_L$ 
and $N_L^1$ are either identical or all different from one another).
$N_R^1$ will run from 0 to $C_R$, i.e.\ from 0 to $C_L-1$.

Since the expression (\ref{eq:genamp1}) is for all walks, we need to 
restrict this to walks beginning with $LR$ and ending with $LL$.
For the composition of $N_L$ $L$s into $C_L$ clusters,
this forces the first cluster to be of size 1, and the last 
\emph{not} to be of size 1. The fraction of walks whose first 
cluster is of size one is $N_L^1/C_L$. Of these, the fraction that
do not have a cluster of size 1 at the end is $(C_L-N_L^1)/(C_L-1)$.
Putting all of this together, the sum 
(\ref{eq:genamp1}) becomes
\begin{align}
\sum_{C_L=2}^{N_L-1}
\sum_{N_L^1=1}^{C_L-1}
\sum_{N_R^1=0}^{C_L-1}
&
\frac{(-1)^{n+N_L^1+N_R^1}}{(2)^{n/2}}
\frac{N_L^1(C_L-N_L^1)}{C_L(C_L-1)}
\left(
\begin{array}{c}
C_L\\
N_L^1\\
N_L
\end{array}
\right)
\left(
\begin{array}{c}
C_L-1\\
N_R^1\\
N_R
\end{array}
\right).\label{eq:amp1} 
\end{align}
We separate out from the sum the limiting case:
\begin{enumerate}
\item 
\underline{$N_R=C_R=N_R^1$} :
Using the result of Lemma \ref{lem:ones}, expression (\ref{eq:amp1}) becomes
\begin{align}
\sum_{\parbox[c]{1.5cm}{\tiny $N_L^1=\max (1,$\\ $2N_R-N_L+2)$}}^{N_R}
\frac{(-1)^{N_L+N_L^1}}{(2)^{n/2}}
\frac{N_L^1(N_R-N_L^1+1)}{N_R(N_R+1)}
\left(
\begin{array}{c}
N_R+1\\
N_L^1\\
N_L
\end{array}
\right).\label{eq:amp2} 
\end{align}
\item
\underline{$N_R > C_R > N_R^1$} : Using Lemma \ref{lem:ones} the amplitude is
\begin{align}
\sum_{C_L=2}^{N_L-1}
&
\sum_{\parbox[c]{1.5cm}{\tiny $N_L^1=\max (1,$\\ $2C_L-N_L)$}}^{C_L-1}
\sum_{\parbox[c]{1.5cm}{\tiny $N_R^1=\max (0,$\\ $2C_L-N_R-2)$}}^{C_L-2}
\frac{(-1)^{n+N_L^1+N_R^1}}{(2)^{n/2}}
\nonumber
\\
&
\frac{N_L^1(C_L-N_L^1)}{C_L(C_L-1)}
\left(
\begin{array}{c}
C_L\\
N_L^1\\
N_L
\end{array}
\right)
\left(
\begin{array}{c}
C_L-1\\
N_R^1\\
N_R
\end{array}
\right)\label{eq:amp3},
\end{align}
\end{enumerate}
which gives us Equation \ref{eq:aLL}.
\item[Final State $\ket{k-1,k,0}$]
This corresponds to a walk of the form $LR\dots LR$.
Since the 
sequence begins with an $L$ and ends with an $R$, we have $C_R=C_L$, which 
removes the summation over $C_R$.
In this particular case, the possible values of $C_L$ run
from 2 to $N_L$.

Let us examine $N_L^1$ and $N_R^1$. For a particular value of
$C_L$, $N_L^1$ will run from 1 to $C_L$.
$N_R^1$ will run from 1 to $C_R$, i.e.\ from 1 to $C_L$.
We now restrict Expression (\ref{eq:genamp1}) 
to walks beginning with $LR$ and ending with $LR$.
In the composition of $L$s, the first cluster must be of 
size 1: $N_L^1/C_L$ is the fraction of walks having this property.
In the corresponding composition of $R$s, the last cluster must be of 
size 1: $N_R^1/C_R$ is the fraction of walks having this property.
Putting all of this together and applying Lemma \ref{lem:ones}, the sum 
(\ref{eq:genamp1}) becomes
\begin{align}
\sum_{C_L=2}^{N_L}
\sum_{\parbox[c]{1.5cm}{\tiny $N_L^1=\max (1,$\\ $2C_L-N_L)$}}^{C_L}
\sum_{\parbox[c]{1.5cm}{\tiny $N_R^1=\max (1,$\\ $2C_L-N_R)$}}^{C_L}
\frac{(-1)^{n+N_L^1+N_R^1}}{(2)^{n/2}}
\frac{N_L^1(N_R^1)}{C_L^2}
\left(
\begin{array}{c}
C_L\\
N_L^1\\
N_L
\end{array}
\right)
\left(
\begin{array}{c}
C_L\\
N_R^1\\
N_R
\end{array}
\right).\label{eq:amp5} 
\end{align}
We consider the cases
\begin{enumerate}
\item 
\underline{$N_L=C_L=N_L^1$} and \underline{$N_R=C_R=N_R^1$} :
This corresponds to an alternating sequence of $L$s and $R$s, 
$LRLRLR\dots LR$. Clearly this path only exists if $N_L = N_R$, so 
the amplitude is simply $\delta_{N_L,N_R}/2^{n/2}$.
\item 
\underline{$N_L=C_L=N_L^1$} and \underline{$N_R > C_R > N_R^1$} :
Expression (\ref{eq:amp5}) becomes
\begin{align}
\sum_{\parbox[c]{1.5cm}{\tiny $N_R^1=\max (1,$\\ $2N_L-N_R)$}}^{N_L-1}
\frac{(-1)^{N_R+N_R^1}}{(2)^{n/2}}
\frac{N_R^1}{N_L}
\left(
\begin{array}{c}
N_L\\
N_R^1\\
N_R
\end{array}
\right).\label{eq:amp4} 
\end{align}

\item 
\underline{$N_L > C_L > N_L^1$} and \underline{$N_R = C_R = N_R^1$} :
Expression (\ref{eq:amp5}) becomes
\begin{align}
\sum_{\parbox[c]{1.5cm}{\tiny $N_L^1=\max (1,$\\ $2N_R-N_L)$}}^{N_R-1}
\frac{(-1)^{N_L+N_L^1}}{(2)^{n/2}}
\frac{N_L^1}{N_R}
\left(
\begin{array}{c}
N_R\\
N_L^1\\
N_L
\end{array}
\right).
\label{eq:amp6} 
\end{align}
\item 
\underline{$N_L > C_L > N_L^1$} and \underline{$N_R > C_R > N_R^1$} :
The amplitude is
\begin{align}
\sum_{C_L=2}^{N_L-1}
\sum_{\parbox[c]{1.5cm}{\tiny $N_L^1=\max (1,$\\ $2C_L-N_L)$}}^{C_L-1}
\sum_{\parbox[c]{1.5cm}{\tiny $N_R^1=\max (1,$\\ $2C_L-N_R)$}}^{C_L-1}
\frac{(-1)^{n+N_L^1+N_R^1}}{(2)^{n/2}}
\frac{N_L^1(N_R^1)}{C_L^2}
\left(
\begin{array}{c}
C_L\\
N_L^1\\
N_L
\end{array}
\right)
\left(
\begin{array}{c}
C_L\\
N_R^1\\
N_R
\end{array}
\right).\label{eq:amp7} 
\end{align}

\end{enumerate}
\item[Final State $\ket{k+1,k,0}$]
This corresponds to a walk of the form $LR\dots RL$.
Since the 
sequence begins with an $L$ and ends with an $L$, we have $C_R=C_L-1$, which 
removes the summation over $C_R$.
In this particular case, the possible values of $C_L$ run
from 2 to $N_L$.

For a particular value of $C_L$, $N_L^1$ will run from 2 to $C_L$, while 
$N_R^1$ can run from 0 to $C_L-1$. We restrict Expression (\ref{eq:genamp1})
to walks beginning with $LR$ and ending with $RL$. This places restrictions 
on the composition of the $N_L$ $L$s, but not on the $R$s. The fraction of 
walks that begin with a single $L$ is $N_L^1/C_L$. Of these, the fraction
that end also in a single $L$ is $(N_L^1-1)/(C_L-1)$.
Putting all of this together and applying Lemma \ref{lem:ones}, the sum 
(\ref{eq:genamp1}) becomes
\begin{align}
\sum_{C_L=2}^{N_L}
\sum_{\parbox[c]{1.5cm}{\tiny $N_L^1=\max (2,$\\ $2C_L-N_L)$}}^{C_L}
\sum_{\parbox[c]{1.5cm}{\tiny $N_R^1=\max (0,$\\ $2C_L-N_R-2)$}}^{C_L-1}
\frac{(-1)^{n+N_L^1+N_R^1}}{(2)^{n/2}}
\frac{N_L^1(N_L^1-1)}{C_L(C_L-1)}
\left(
\begin{array}{c}
C_L\\
N_L^1\\
N_L
\end{array}
\right)
\left(
\begin{array}{c}
C_L-1\\
N_R^1\\
N_R
\end{array}
\right).\label{eq:amp8} 
\end{align}
We consider the cases
\begin{enumerate}
\item 
\underline{$N_L=C_L=N_L^1$} and \underline{$N_R=C_R=N_R^1$} :
This corresponds to an alternating sequence of $L$s and $R$s, 
$LRLRLR\dots RL$. Clearly this path only exists if $N_L = N_R+1$, so 
the amplitude is simply $\delta_{N_L-1,N_R}/2^{n/2}$.
\item 
\underline{$N_L=C_L=N_L^1$} and \underline{$N_R > C_R > N_R^1$} :
The summations over $C_L$ and $N_L^1$ vanish and we get
\begin{align}
\sum_{\parbox[c]{1.5cm}{\tiny $N_R^1=\max (0,$\\ $2N_L-N_R-2)$}}^{N_L-2}
\frac{(-1)^{N_R+N_R^1}}{(2)^{n/2}}
\left(
\begin{array}{c}
N_L-1\\
N_R^1\\
N_R
\end{array}
\right).\label{eq:amp9} 
\end{align}
\item 
\underline{$N_L > C_L > N_L^1$} and \underline{$N_R = C_R = N_R^1$} :
The summations over $C_L$ and $N_R^1$ vanish and we get
\begin{align}
\sum_{\parbox[c]{1.5cm}{\tiny $N_L^1=\max (2,$\\ $2N_R-N_L+2)$}}^{N_R}
\frac{(-1)^{N_L+N_L^1}}{(2)^{n/2}}
\frac{N_L^1(N_L^1-1)}{N_R(N_R+1)}
\left(
\begin{array}{c}
N_R+1\\
N_L^1\\
N_L
\end{array}
\right).\label{eq:amp10} 
\end{align}
\item 
\underline{$N_L > C_L > N_L^1$} and \underline{$N_R > C_R > N_R^1$} :
The amplitude becomes
\begin{align}
\sum_{C_L=2}^{N_L-1}
\sum_{\parbox[c]{1.5cm}{\tiny $N_L^1=\max (2,$\\ $2C_L-N_L)$}}^{C_L-1}
\sum_{\parbox[c]{1.5cm}{\tiny $N_R^1=\max (0,$\\ $2C_L-N_R-2)$}}^{C_L-2}
\frac{(-1)^{n+N_L^1+N_R^1}}{(2)^{n/2}}
\frac{N_L^1(N_L^1-1)}{C_L(C_L-1)}
\left(
\begin{array}{c}
C_L\\
N_L^1\\
N_L
\end{array}
\right)
\left(
\begin{array}{c}
C_L-1\\
N_R^1\\
N_R
\end{array}
\right).\label{eq:amp11} 
\end{align}
\end{enumerate}
\item[Final State $\ket{k-1,k,1}$]
This corresponds to a walk of the form $LR\dots RR$.
Since the 
sequence begins with an $L$ and ends with an $R$ we have $C_R=C_L$, which 
removes the summation over $C_R$.
$C_L$ runs
from 1 to $N_L$ while $N_L^1$ runs from 1 to $C_L$.
$C_R$ runs
from 1 to $N_R-1$ while $N_R^1$ runs from 0 to $C_R-1$.

For the walk to be of the required form,
\begin{itemize}
\item the composition of $N_L$ $L$s must begin with a single $L$. This gives
a factor of $N_L^1/C_L$.
\item the composition of $N_R$ $R$s must \emph{not} end
with a single $R$. This gives
a factor of $(C_R-N_R^1)/C_R$.
\end{itemize}
Applying Lemma \ref{lem:ones}, the sum 
(\ref{eq:genamp1}) becomes
\begin{align}
\sum_{C_L=1}^{N_L}
\sum_{\parbox[c]{1.5cm}{\tiny $N_L^1=\max (1,$\\ $2C_L-N_L)$}}^{C_L}
\sum_{\parbox[c]{1.5cm}{\tiny $N_R^1=\max (0,$\\ $2C_L-N_R)$}}^{C_L-1}
\frac{(-1)^{n+N_L^1+N_R^1}}{(2)^{n/2}}
\frac{N_L^1(C_R-N_R^1)}{C_L^2}
\left(
\begin{array}{c}
C_L\\
N_L^1\\
N_L
\end{array}
\right)
\left(
\begin{array}{c}
C_L\\
N_R^1\\
N_R
\end{array}
\right).\label{eq:amp12} 
\end{align}
We have the two following cases:
\begin{enumerate}
\item 
\underline{$N_L=C_L=N_L^1$} :
Expression (\ref{eq:amp12}) reduces to
\begin{align}
\sum_{\parbox[c]{1.5cm}{\tiny $N_R^1=\max (0,$\\ $2N_L-N_R)$}}^{N_L-1}
\frac{(-1)^{N_R+N_R^1}}{(2)^{n/2}}
\frac{N_L-N_R^1}{N_L}
\left(
\begin{array}{c}
N_L\\
N_R^1\\
N_R
\end{array}
\right).\label{eq:amp13} 
\end{align}
\item 
\underline{$N_L > C_L > N_L^1$} :
Expression (\ref{eq:amp12}) reduces to
\begin{align}
\sum_{C_L=1}^{N_L-1}
\sum_{\parbox[c]{1.5cm}{\tiny $N_L^1=\max (1,$\\ $2C_L-N_L)$}}^{C_L-1}
\sum_{\parbox[c]{1.5cm}{\tiny $N_R^1=\max (0,$\\ $2C_L-N_R)$}}^{C_L-1}
\frac{(-1)^{n+N_L^1+N_R^1}}{(2)^{n/2}}
\frac{N_L^1(C_L-N_R^1)}{C_L^2}
\left(
\begin{array}{c}
C_L\\
N_L^1\\
N_L
\end{array}
\right)
\left(
\begin{array}{c}
C_L\\
N_R^1\\
N_R
\end{array}
\right).\label{eq:amp14} 
\end{align}
\end{enumerate}
 \end{description}

\section{Appendix: AXIOM code}
The following is the AXIOM source code for simulating a 40-step quantum walk
with memory, and initial state $\ket{-1,0,0}$.
\verbatiminput{100RTHAD.input}
\textit{}
\bibliographystyle{plain}
\bibliography{b2}

\begin{thebibliography}{10}

\bibitem{ambainis}
Andris Ambainis.
\newblock Quantum walks and their algorithmic applications.
\newblock {\em International Journal of Quantum Information}, 1:507--518, 2003.

\bibitem{watrous}
Andris Ambainis, Eric Bach, Ashwin Nayak, Ashvin Vishwanath, and John Watrous.
\newblock One dimensional quantum walks.
\newblock In {\em Proceedings of STOC'01}, pages 37--49, 2001.

\bibitem{coins}
Todd~A. Brun, Hilary~A. Carteret, and Andris Ambainis.
\newblock Quantum walks driven by many coins.
\newblock {\em Phys. Rev. A}, 67(052317), 2003.

\bibitem{flitney}
A.~P. Flitney, D.~Abbott, and N.~F. Johnson.
\newblock Quantum random walks with history dependence.
\newblock {\em J. Phys. A: Math. Gen.}, 37:7581--7591, 2004.

\bibitem{inui}
N.~Inui and N.~Konno.
\newblock Localization of multi-state quantum walk in one dimension.
\newblock {\em Physica A}, 353:133--144, 2005.

\bibitem{segawa}
N.~Inui, N.~Konno, and E.~Segawa.
\newblock One-dimensional three-state quantum walk.
\newblock {\em Physical Review E}, 72(056112), 2005.

\bibitem{axiom}
R.~J. Jenks and R.~S. Sutor.
\newblock {\em Axiom - The Scientific Computation System}.
\newblock Springer Verlag New York, 1992.
\newblock ISBN 0-387-97855-0.

\bibitem{kempe}
Julia Kempe.
\newblock Quantum random walks - an introductory overview.
\newblock {\em Contemporary Physics}, 44(4):307--327, 2003.

\bibitem{kendon}
Viv Kendon.
\newblock Decoherence in quantum walks - a review.
\newblock {\em Math. Struct. in Comp. Sci}, 17(6):1169--1220, 2006.

\bibitem{konno}
N.~Konno.
\newblock {\em Quantum walks. In:Quantum Potential Theory, Franz, U., and
  Sch\"urmann, M., Eds.}, volume 1954 of {\em Lecture Notes in Mathematics},
  pages 309--452.
\newblock Springer-Verlag, Heidelberg, 2008.

\bibitem{konno1}
N.~Konno.
\newblock Localization of an inhomogeneous discrete-time quantum walk on the
  line.
\newblock {\em Quantum Information Processing}, In Press.

\bibitem{merlini}
Donatella Merlini, Francesca Uncini, and M.~Cecilia Verri.
\newblock A unified approach to the study of general and palindromic
  compositions.
\newblock {\em INTEGERS: Electronic Journal of Combinatorial Number Theory},
  4(A23), 2004.

\bibitem{meyer}
David~A. Meyer.
\newblock From quantum cellular automata to quantum lattice gases.
\newblock {\em J. Stat. Phys}, 85:551--574, 1996.

\bibitem{nayak}
Ashwin Nayak and Ashvin Vishwanath.
\newblock Quantum walk on the line.
\newblock {\em quant-ph/0010117}, 2000.

\bibitem{stang}
J.~B. Stang, A.~T. Rezakhani, and B.~C. Sanders.
\newblock Correlation effects in a discrete quantum random walk.
\newblock {\em J. Phys. A: Math. Theor.}, 42(175304), 2009.

\bibitem{venegas}
S.~E. Venegas-Andraca.
\newblock {\em Quantum Walks for Computer Scientists}.
\newblock Morgan and Claypool, 2008.

\end{thebibliography}
\end{document}